\documentclass[a4paper,USenglish,cleveref, autoref, thm-restate]{lipics-v2021}
\usepackage{amsmath,amssymb,amsthm,bbold,mathtools}
\usepackage{url}
\usepackage{hyperref}
\usepackage{doi}

\usepackage{graphicx}
\usepackage{caption}
\usepackage{subcaption}
\captionsetup{compatibility=false}

\usepackage{float}

\usepackage{lineno}
\nolinenumbers 

\numberwithin{equation}{section}

\renewcommand{\H}{\mathbb{H}^2}
\newcommand{\SSS}{\mathbb{S}}

\newcommand{\proj}{\rho} 

\newcommand{\h}[1]{\widetilde{#1}} 

\newcommand{\hypS}{S} 

\newcommand{\id}{\mathbb{1}} 

\newcommand{\gp}{\Gamma}
\newcommand{\g}{\gamma}
\newcommand{\alphabet}{\mathcal{A}_\Gamma}
\newcommand{\inv}[1]{{#1}^{\scriptscriptstyle -1}} 


\newcommand{\poly}{\Pi} 
\newcommand{\prev}[1]{\mathsf{prev}(#1)}
\newcommand{\nxt}[1]{\mathsf{next}(#1)} 
\newcommand{\source}[1]{\mathsf{source}(#1)}
\newcommand{\target}[1]{\mathsf{target}(#1)} 
\newcommand{\pair}[1]{\mathsf{pair}(#1)} 
\newcommand{\point}[1]{\mathsf{point}(#1)}
\newcommand{\alp}[1]{\mathsf{w}(#1)}

\newcommand{\graph}{G_\Pi} 
\newcommand{\tree}{\mathcal{T}} 

\newcommand{\pat}[1]{\mathtt{p}(#1)}

\newcommand{\Dir}[1]{\mathcal{D}_{\h{#1}}} 
\newcommand{\fd}{\mathcal{F}} 

\DeclareMathOperator{\dia}{diam}

\DeclareMathOperator{\inter}{int}


\newcommand{\tab}{\texttt{t}}








\bibliographystyle{plainurl}

\title{Computing a Dirichlet domain for a hyperbolic surface}

\author{Vincent Despr\'e}{Universit\' e de Lorraine, CNRS, Inria, LORIA, F-54000 Nancy, France}{vincent.despre@loria.fr}{}{}

\author{Benedikt Kolbe}{Hausdorff Center for Mathematics, University of Bonn, Germany \and \url{https://hyperbolictilings.wordpress.com/}}{benedikt.kolbe@physik.hu-berlin.de}{}{This work was done while this author was working at Universit\' e de Lorraine, CNRS, Inria, LORIA, F-54000 Nancy, France}

\author{Hugo Parlier}{Department of Mathematics, University of Luxembourg \and \url{https://math.uni.lu/parlier/}}{hugo.parlier@uni.lu}{}{}

\author{Monique Teillaud}{Universit\' e de Lorraine, CNRS, Inria, LORIA, F-54000 Nancy, France \and \url{https://members.loria.fr/monique.teillaud/}}{monique.teillaud@loria.fr}{}{}

\authorrunning{V. Despr\' e, B. Kolbe, H. Parlier, M. Teillaud}

\funding{This work was partially supported by grant 
    ANR-17-CE40-0033 of the French National Research Agency ANR (project SoS)
    and
    INTER/ANR/16/11554412/SoS of the Luxembourg National Research fund FNR
    \url{https://SoS.loria.fr/}.}



\ccsdesc[300]{Mathematics of computing~Geometric topology}
\ccsdesc[500]{Theory of computation~Computational geometry}


\keywords{Hyperbolic geometry; Topology; Voronoi diagram; Algorithm}

\EventEditors{Erin Chambers and Joachim Gudmundsson}
\EventNoEds{2}
\EventLongTitle{39th Symposium on Computational Geometry}
\EventShortTitle{SoCG 2023}
\EventAcronym{SoCG}
\EventYear{2023}
\EventDate{June 12-15, 2023}
\EventLocation{Dallas, USA}
\EventLogo{}
\SeriesVolume{YY}
\ArticleNo{89}

\hideLIPIcs

\begin{document}
\maketitle

\begin{abstract}
The goal of this paper is to exhibit and analyze an algorithm that takes a given closed orientable hyperbolic surface and outputs an explicit Dirichlet domain. The input is a fundamental polygon with side pairings. While grounded in topological considerations, the algorithm makes key use of the geometry of the surface. We introduce data structures that reflect this interplay between geometry and topology and show that the algorithm finishes in polynomial time, in terms of the initial perimeter and the genus of the surface.
\end{abstract}

\section{Introduction and motivation}\label{sec:intro}

Hyperbolic surfaces and their moduli spaces play an ubiquitous role in mathematics, namely, through  relationships with other areas including Riemannian geometry, number theory, geometric group theory and mathematical physics. Algorithms for surface groups, as combinatorial or topological objects, have a rich history dating back to Dehn. Recently, in part motivated by applications in other sciences~\cite{bv-cp-86,nature-2022}, there has been a push to understand hyperbolic structures on surfaces from the point of view of computational geometry. 

Dealing with hyperbolic surfaces necessarily involves describing them --- or even visualizing them --- meaningfully. A fundamental domain (in the hyperbolic plane) with a side pairing is one way to determine a hyperbolic metric on the surface. Lengths of curves in a pants decomposition and their associated pasting parameters (so-called Fenchel-Nielsen coordinates) are another. No matter which construction or parameter set used, it is always interesting to know to which extent two different constructions output the ``same'' surface, where ``same'' can take different meanings. However, these representations, either by a fundamental domain or a set of Fenchel-Nielsen coordinates, are not unique, and determining a canonical representation is challenging for either option. In this paper, we tackle this question for fundamental domains, by computing a so-called Dirichlet domain. 

Roughly speaking, a Dirichlet domain of a hyperbolic surface is a fundamental polygon in the hyperbolic plane, with a special point where distances to that point in the polygon correspond to distances on the surface. Another way of thinking of them is that it is a Voronoi cell associated to a lift of a single point of the surface to its universal cover $\H$. A more precise definition is given in the next section. Note that for hyperbolic surfaces any given surface has infinitely many Dirichlet domains up to isometry. This is in strong contrast to, for example, flat tori. Nonetheless, when describing a surface via fundamental domains, the prize for the most relevant geometric domain undoubtedly goes to Dirichlet domains because they visualize the distance function for a given point. As far as we know, there is only one algorithm in the literature that computes a Dirichlet domain for a hyperbolic surface and a given point~\cite{Voight2009}. Unfortunately, the run-time of the algorithm is not studied and an analysis seems complicated.

The contribution of this paper is an algorithm that computes a Dirichlet domain efficiently, and its analysis. The point defining the domain is not given as input, but it is part of the output. The Dirichlet domain of a given input point can then be computed with a complexity that only depends of the genus of the surface~\cite{dkt-rihpd-21}.
Our main result is the following:
\begin{theorem}\label{th:main}
Let $\hypS$ be a closed orientable hyperbolic surface of genus $g$ given by a fundamental polygon of perimeter $L$ and side pairings. A Dirichlet domain for $\hypS$ can be computed in $O((g^2L)^{6g-4})$ time.
\end{theorem}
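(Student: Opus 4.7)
The plan is to compute the Dirichlet domain as the Voronoi cell, in the universal cover $\H$, of a lift $\tilde p$ of a chosen surface point $p$ with respect to the orbit $\gp \cdot \tilde p$, where $\gp < \PSL(2,\mathbb{R})$ is the deck transformation group extracted from the input side pairings. As a first step, I would convert each side pairing into an explicit isometry of $\H$, yielding $4g$ matrix generators of $\gp$ whose hyperbolic translation lengths are bounded by $O(L)$, and then pick a center point $p$ in the interior of the input polygon and lift it to $\tilde p$.

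The Voronoi cell is then computed incrementally. I would maintain a current candidate polygon $D$ together with a priority queue of orbit points $\g\tilde p$ ordered by $d(\tilde p, \g\tilde p)$. At each step, pop the closest unprocessed orbit point, intersect $D$ with the perpendicular bisector half-plane between $\tilde p$ and $\g\tilde p$, and enqueue the images of $\g\tilde p$ under each generator. Stop as soon as the next candidate is at distance more than twice the current outradius of $D$ from $\tilde p$: by the triangle inequality, the corresponding bisector cannot meet $D$, and no orbit point farther away can either, so $D$ is the true Dirichlet domain.

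Correctness rests on the fact that the radius of the Dirichlet domain is at most the diameter of $\hypS$, which is itself at most $L/2$ since any two points of the input polygon are joined by a boundary arc of length $\leq L/2$. For the complexity, the Dirichlet domain has at most $12g-6$ sides (Euler-characteristic computation for the dual Delaunay triangulation, which has a single vertex), so only $O(g)$ orbit points are active and each intersection update costs $O(g)$. The dominant cost is the number of orbit points explored before termination; a crude Gauss--Bonnet bound (the surface has area $4\pi(g-1)$) yields exponentially many orbit points in the relevant ball of radius $L$, which is too weak.

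The main obstacle is therefore to refine this to the claimed $(g^2L)^{6g-4}$ bound. The strategy I would pursue is to avoid enumerating all orbit points in a ball and instead structure the exploration around the eventual $O(g)$ active orbit points, each represented by a word of bounded length in the generators. Controlling word length via the diameter bound $L/2$ and the translation-length bounds on the generators, and enumerating combinatorial configurations of the $4g-2$ Voronoi vertices dual to $D$, should give polynomially many candidate configurations in the regime of fixed $g$, each verifiable in polynomial time, yielding the target bound $O((g^2L)^{6g-4})$.
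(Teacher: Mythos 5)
There is a genuine gap, and it sits exactly where the whole difficulty of the theorem lies: the complexity bound. Your algorithm is essentially the direct incremental Voronoi-cell computation (this is in substance the known algorithm of Voight cited in the paper, whose run-time is precisely what is not understood). You correctly observe that the number of orbit points $\g\tilde p$ within distance comparable to $\dia(\hypS)\le L/2$ of $\tilde p$ is, by an area/Gauss--Bonnet count, exponential in $L$, so the straightforward exploration does not give the claimed bound. The proposed repair --- ``structure the exploration around the eventual $O(g)$ active orbit points, control word lengths via the diameter bound, and enumerate combinatorial configurations of the $4g-2$ Voronoi vertices, which should give polynomially many candidates'' --- is not an argument. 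The number of words of length $\le k$ in $4g$ generators is exponential in $k$, and $k$ must be allowed to grow linearly in $L$ (divided by a quantity like the systole, which is not controlled by $g$ and $L$); nothing in the proposal shows that only polynomially many candidate configurations arise, nor how a candidate configuration would be certified as the true Dirichlet cell without again inspecting exponentially many orbit points. The exponent $6g-4$ is asserted to appear but is never derived. (A secondary, smaller gap: your termination rule needs a proof that every orbit point within twice the current outradius has already been enqueued by generator images of processed points.)

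The paper reaches the bound by a different route, and the exponent has a specific provenance there: first build a system of $2g$ loops through a common basepoint by contracting a spanning tree of the input polygon's graph (total length $O(gL)$, time $O(g^3)$); then move the basepoint to the intersection of the two simple closed geodesics homotopic to a pair of loops intersecting once, which makes the geodesic loop system embedded and yields a \emph{convex} one-vertex fundamental polygon of perimeter $O(g^2L)$; triangulate it and run the Delaunay flip algorithm, whose only known complexity bound (in terms of the diameter $\Delta$ of the input one-vertex triangulation) supplies the factor $(g^2L)^{6g-4}$; finally the Dirichlet domain of the new basepoint is read off in $O(g)$ time as the dual of the one-vertex Delaunay triangulation (circumcenters of the triangles around a lift). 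Note also that in the paper the center point is part of the output, chosen for this purpose, not an arbitrary interior point of the input polygon. Without importing that flip-algorithm bound, or supplying a genuinely new counting argument for your incremental construction, your proposal does not establish the stated $O((g^2L)^{6g-4})$ complexity.
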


A key ingredient is the use of \emph{Delaunay triangulations} on hyperbolic surfaces, an area of research that has recently gained traction, both from an experimental and a theoretical perspective~\cite{Bogdanov2014,deblois2015,bogdanovsocg,iordanov,deblois2018,ebbens_et_al}.
Recently, it has been shown that the well-known flip algorithm that computes the Delaunay triangulation of a set of points in the Euclidean plane $\mathbb{E}^2$ also works on a hyperbolic surface; the complexity result announced in Theorem~\ref{th:main} crucially depends on the only known upper bound on the complexity of this Delaunay flip algorithm~\cite{despreflips}. The algorithm subsumes the real RAM model. Studying the algebraic numbers involved in the computations goes beyond the scope of this paper. 

The paper is structured as follows: In Section~\ref{sec:algo}, we give an overview of the algorithm and we present the data structure. Sections~\ref{sec:vertreduction} and~\ref{sec:magic} explain in detail the main two steps of the algorithm, which output a geometric triangulation of the surface having only one vertex. Finally, Section~\ref{sec:dirichlet} builds on the literature and concludes the proof of Theorem~\ref{th:main} with the last step of the algorithm.

\section{Preliminaries}\label{sec:prelim}

We begin by recalling a collection of facts and setting notations, and we refer to~\cite{beardon,busercompactsurfs,primermcgs} for details. The surfaces studied in this paper are assumed to be closed, orientable, and of genus $g\geq 2$. We begin with a topological surface and endow it with a hyperbolic metric to obtain a hyperbolic surface, generally denoted by $\hypS$. A hyperbolic surface is locally isometric to its universal covering space, the hyperbolic plane $\H$. Such surfaces can always be obtained by considering the quotient of $\H$ under the action of $\gp$, a discrete subgroup of isometries of $\H$ isomorphic to the fundamental group $\pi_1(\hypS)$. 

Let  $\hypS:=\H/\gp$ be a hyperbolic surface of genus $g$ and fundamental group $\gp$. The projection map is denoted as $\proj: \H \rightarrow \hypS=\H/\gp$. We denote by $\h{x}\in\proj^{-1}(x)$ one of the lifts, to $\H$, of an object $x$ on $\hypS$. More generally, objects in $\H$ are denoted with $\h{~}$.

A \emph{fundamental domain} $\fd$ for the action of $\gp$ is defined as a closed domain, i.e., $\overline{\inter(\fd)}=\fd$, such that $\gp\fd=\H$ and the interiors of different copies of $\fd$ under $\gp$ are disjoint.

For a point $\h{x}\in \H$, the \emph{Dirichlet domain} $\Dir{x}$ is defined as the Voronoi cell containing $\h{x}$, of the Voronoi diagram associated to the point set $\gp\h{x}$. In other words,
\[ \Dir{x}=\{\: \h{y}\in \H\: | \: d_{\H}(\h{x},\h{y})\le d_{\H}(\h{x},\gp \h{y})\: \}=\{\: \h{y}\in \H \: | \: d_{\H}(\h{x},\h{y})\le d_{\H}(\gp \h{x},\h{y})\:\},\]
where the equality is true since $\gp$ acts as isometries w.r.t.\ $d_{\H}$. 
The Dirichlet domain is a compact convex fundamental domain for $\gp$ with finitely many geodesic sides~\cite[\S9.4]{beardon} and is generally considered a canonical choice of fundamental domain. A property of Dirichlet domains, of interest for the conception of algorithms, is that, by the triangle inequality, 
\[
\dia(\Dir{x})\le 2\dia(\hypS)\le 2\dia(\Dir{x}),
\]
where $\dia(\cdot)$ denotes the diameter. 

\subsection{Curves, paths, and loops}\label{sec:curves}
Recall that a closed curve is the image of $\SSS^1$ under a continuous map; a curve is non-trivial (or essential) if it is not freely homotopic to a point. Similarly, a path is a continuous image of the interval $[0,1]$, and the images of $0$ and $1$ are referred to as its endpoints. A loop is a path whose endpoints are equal; this endpoint is referred to as its basepoint.

For a closed curve or loop $c$, we will denote by $[c]$ its free homotopy class, and, if $c$ is based in a point $p$, by $[c]_p$ its homotopy class of loops based in $p$. For a path $c$ between points $p$ and $q$, we denote by $[c]_{p,q}$ the homotopy class of the path with fixed endpoints. We will readily make use of the fact that if $c$ is closed non-trivial curve on a hyperbolic surface, then in $[c]$ there is a unique closed geodesic. Similarly, if $c$ is a loop based in $p$, in $[c]_p$ there is a unique closed geodesic loop, and if $c$ is a path between $p$ and $q$, in $[c]_{p,q}$ there is a unique geodesic path. If $c$ is a simple closed curve then the closed geodesic in $[c]$ is also simple, but this is no longer necessarily the case for loops or paths with basepoints.

The intersection number $i(c,c')$ between homotopy classes of curves $c$ and $c'$ is defined as the minimal intersection among its representatives. Note that closed geodesics on a hyperbolic surface always intersect minimally. The situation for paths is slightly different. The unique geodesic representatives of paths (with fixed end points) might not intersect minimally. This subtlety plays a key technical role in our story.

\subsection{Fundamental polygon}\label{sec:fpoly}
Let $\hypS$ be a (closed) hyperbolic surface of genus $g$ and fundamental group $\gp$. A polygon $P\subset\H$ (i.e., a circular sequence of geodesic edges) bounding a fundamental domain for $\gp$ (as defined in the introduction) is called a fundamental polygon. Poincaré's theorem implies that $\gp$ is generated by the side pairings on $P$~\cite[$\S9.8$]{beardon}.
The edges and vertices of $P$ project to a graph $G_P$ on $\hypS$; the region enclosed by $P$ projects to the unique face of $G_P$.

The numbers $n_G$ of vertices and $m_G$ of edges of $G_P$ satisfy Euler's relation $n_G-m_G+1=2-2g$, as there is only one face. It follows that if $G_P$ only has one vertex, then that vertex is incident to the $m_G=2g$ edges, which are actually all loops. The number of vertices is maximal when they all have degree~3 (then there are no loops); in this case $3n_G=2m_G$, so, $m_G=6g-3$ and $n_G=4g-2$. More generally, the number $2m_G$ of edges and vertices of $P$ lies between the two extreme cases: $4g\leq 2m_G \leq 12g-6$. When $2m_G < 12g-6$, some vertices of $P$ project to the same vertex of $G_P$, i.e., they belong to the same orbit under~$\gp$. $G_P$ has a loop for each edge whose vertices are in the same orbit; then the projected point on $\hypS$ is incident to that loop twice.

\section{Algorithm overview}\label{sec:algo}

Let $\hypS$ be a (closed) hyperbolic surface of genus $g$ and fundamental group $\gp$. 

We propose the algorithm sketched below to compute a Dirichlet fundamental domain of $\hypS$. The output of Step~\ref{step:vertreduction} will be denoted with primes; it will be used as input for Step~\ref{step:magic}, whose output will be denoted with double primes. 

\begin{enumerate}
\item\label{step:vertreduction} Construct a system $\beta'_0,\ldots,\beta'_{2g-1}$ of simple topological loops based at the same point~$b'$ that cuts $\hypS$ into a disk (Section~\ref{sec:vertreduction}).
\item\label{step:magic} Find a point $b''$ so that the system of geodesic loops based at $b''$, conjugate to the ones computed in Step~(\ref{step:vertreduction}), is embedded (Section~\ref{sec:magic}).
\item\label{step:dirichlet} Construct the Dirichlet domain of a lift $\h{b''}$ of $b''$ (Section~\ref{sec:dirichlet}). 
\end{enumerate}

Obviously, the complexity of the algorithm heavily depends on the data structure used to store the objects involved in the constructions.
As the algorithm actually operates in the universal covering space $\H$ of $\hypS$, it is natural to present the data structure in $\H$. We assume that, as input, we are given a fundamental polygon $\poly\subset\H$ for $\gp$, together with side pairings, as in Section~\ref{sec:fpoly}. The data structure described below is actually equivalent to a combinatorial map~\cite[Section~3.3]{mt-gs-01} on $\hypS$, enriched with geometric information. In particular, for each vertex $x$ of $\graph$ (the projection of $\poly$ onto $\hypS$, as in Section~\ref{sec:fpoly}), the sequence of edges around $x$ is ordered (edges that correspond to a loop appear twice). 

\paragraph*{Description of the input.} Let a representative $\h{e_i}, i=0,\ldots,m-1$ be chosen for each couple of paired edges of $\poly$ and denote as $\g_0,\ldots,\g_{m-1}$ the corresponding side pairings in $\poly$: the other edge of the couple is $\inv{\g_i}\h{e_i}$, where $\inv{\g_i}$ is the inverse of $\g_i$. We denote the set of the $2m$ edges of $\poly$ as $E_\poly$ and the set of its $2m$ vertices as $V_\poly$. 
We choose a representative $\h{v_j}, j=0,\ldots,n-1$ for each orbit of vertices of $\poly$; $n$ is the number of vertices of $\graph$.   


Each element of $\gp$ can be represented as a word on the alphabet $\alphabet=\{\id,\g_0,\ldots,\g_{m-1},\\ \inv{\g_0},\ldots,\inv{\g_{m-1}}\}$, where $\id$ denotes the identity in $\gp$. Here, letters of $\alphabet$ and the corresponding generators in $\gp$ are denoted by the same symbol; this should not cause any confusion. 

The data structure is roughly a doubly linked list of edges of $\poly$, which stores the combinatorial information. Additional information is necessary to store the geometry (i.e., the positions of the vertices of $\poly$ in $\H$) and the side pairings. The data stored for each edge and vertex is constant, so the size of the data structure is $O(g)$ (we do not try to shave constants in the $O( )$).

Concretely, for each edge $\h{x}\in E_\poly$, the data structure stores: 
\begin{itemize}
\item two pointers $\prev{\h{x}}$ and $\nxt{\h{x}}$ that give access to the previous and next edges in $\poly$, respectively (in counterclockwise order); 
\item two pointers $\source{\h{x}}$ and $\target{\h{x}}$ that give access to the source and target of $\h{x}$ in $\poly$, respectively (in counterclockwise order); when $\proj\h{x}$ is a loop in $\graph$, $\source{\h{x}}$ and $\target{\h{x}}$ lie in the same orbit under $\gp$;
\item a pointer to the paired edge $\pair{\h{x}}$ in $\poly$; 
\item a letter $\alp{\h{x}}\in\alphabet$ that encodes the relation between $\h{x}$ and $\pair{\h{x}}$: \\
$ \alp{\h{x}} = 
\begin{dcases*}
\id & if $\h{x}=\h{e_i}$ \\
\g_i & if $\h{x}=\inv{\g_i}\h{e_i}$ 
\end{dcases*}
$ for some $i\in\{0,\ldots,m-1\}$.

By definition, 
$ \pair{\h{x}} =
\begin{dcases*}
\inv{\g_i}\h{x} & when $\alp{\h{x}}=\id$ ($\h{x}=\h{e_i}$)\\
\g_i\h{x} & when $\alp{\h{x}}=\g_i$
\end{dcases*}
$.
\end{itemize}

For each vertex $\h{y}\in V_\poly$, the data structure stores:
\begin{itemize}
\item  $\point{\h{y}}$, which is the representative point of its orbit: $\point{\h{y}}=\h{v_j}$ for some $j\in\{0,\ldots,n-1\}$; 
\item a word $\g_{\h{y}}$ on $\alphabet$ (equivalently, $\g_{\h{y}}\in\gp$), which specifies the precise position $\g_{\h{y}}\,\point{\h{y}}$ of $\h{y}$ in $\H$. 
\end{itemize}

The graph $\graph$ lifts in the universal covering space $\H$ to the (infinite) graph $\proj^{-1}{\graph}=\gp\poly$. 
In particular, the sequence of edges of $\gp\poly$ incident to a given vertex $\h{v}\in\proj^{-1}{v}$ is a sequence of lifts of the edges incident to $v$ in $\graph$. Each of these lifts is the image by an element of $\gp$ of an edge of $\poly$ (Figure~\ref{fig:vertex}). The following result is straightforward from the data structure. We still prove it for completeness.

\begin{lemma}\label{lem:incident-edges}
Let $e$ be an edge of $\graph$ and $v$ a vertex of $e$. The sequence of edges of $\graph$ incident to $v$ can be found in time $O(g)$.
\end{lemma}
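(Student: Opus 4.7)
The plan is to walk around $v$ using the local pointers of the data structure. First I choose an edge $\h{e} \in E_\poly$ projecting to $e$ and arrange, by swapping $\h{e}$ with $\pair{\h{e}}$ if necessary, that $\h{v} := \source{\h{e}} \in V_\poly$ projects to $v$; this initial set-up takes $O(1)$ time. I then iterate the map $\h{e} \mapsto \nxt{\pair{\h{e}}}$: since side pairings reverse orientation along glued edges, $\target{\pair{\h{e}}}$ is again a lift of $v$, and $\nxt{\pair{\h{e}}}$ is the outgoing edge of $\poly$ at that new lift. Each iteration therefore moves from one lift of $v$ in $V_\poly$ to the next in cyclic order around $v$ and, reading off the projection, outputs the next edge-end of $v$ in $\graph$.

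Correctness reduces to the description of the star of a vertex in the combinatorial map of $\graph$ on $\hypS$. Since $\graph$ has a single face, each corner at $v$ is contributed by exactly one lift $\h{v}' \in V_\poly$, with the polygon's two edges at $\h{v}'$ becoming consecutive darts at $v$; consecutive lifts are glued by the side-pairing identification between the outgoing edge at one and the incoming edge at the next, which is exactly what the move $\h{e} \mapsto \nxt{\pair{\h{e}}}$ records. Loops at $v$ are enumerated twice automatically, because their two edge-ends at $v$ sit at two different lifts in $V_\poly$. The walk terminates when we return to the initial edge.

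For complexity, each iteration of the walk performs $O(1)$ pointer look-ups, and the total number of iterations equals the degree of $v$ in $\graph$, i.e., the size of the orbit of $\h{v}$ in $V_\poly$. By the bounds from Section~\ref{sec:fpoly}, the total number of edge-ends in $\graph$ is $2m_G \le 12g-6$, so the orbit of $v$ has size $O(g)$ and the full enumeration runs in $O(g)$ time. The only piece of real care needed is the source-versus-target bookkeeping around the side pairing, which must be set up so that each iteration advances by exactly one edge-end in the star of $v$; this is the main obstacle, but it is routine once the correspondence between the data structure and the combinatorial map is made explicit.
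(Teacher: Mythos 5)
Your proposal is correct and follows essentially the same approach as the paper: it walks around $v$ by alternating the $\pair{\cdot}$ pointer with $\nxt{\cdot}$ (the paper uses $\prev{\cdot}$ or $\nxt{\cdot}$ depending on whether $\h{v}$ is the source or target), performing $O(1)$ pointer accesses per edge-end and bounding the number of edge-ends by $2m_G\le 12g-6=O(g)$.
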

\begin{proof}
Without loss of generality (this can always be achieved by renaming edges and vertices of $\poly$) $\h{v}=\source{\h{e}}$ (as in Figure~\ref{fig:vertex} for $e=e_0$), or $\h{v}=\target{\h{e}}$. 
Consider the first case. After $e$, the next edge incident to $v$ in counterclockwise order in $\graph$ is given in $\poly$ by $\h{x}=\prev{\h{e}}$, whose vertex $\target{\h{x}}$ is $\h{v}$. The next edge incident to $v$ in $\graph$ is given by $\prev{\pair{\h{x}}}$, whose target vertex lies in the same orbit as $\h{v}$ under $\gp$. And so on: a sequence of accesses to $\pair{\cdot}$ and $\prev{\cdot}$ allows us to find the edges of $\gp\poly$ incident to $\h{v}$ in counterclockwise order. The process perfoms a constant number of accesses for each edge incident to $\h{v}$, and the number of such edges is linear in $g$ as recalled above. 
The case when $\h{v}=\target{\h{e}}$ is similar: $\nxt{\cdot}$ is used instead of $\prev{\cdot}$.
\end{proof}

In addition, the precise positions in $\H$ of all vertices of $\poly$ in the orbit $\proj^{-1}{v}$ can be computed along the process using the information $\point{\cdot}$ and $\alp{\cdot}$ stored in the data structure, without changing the complexity. 

Relations in the finitely presented group~\cite[Chapter~5.5]{cm-grdg-57} $\gp$ can be deduced by comparing the two sequences of edges ---clockwise and counterclockwise--- around each vertex. 

\begin{figure}[htbp]
	\centering
	\includegraphics[page=1]{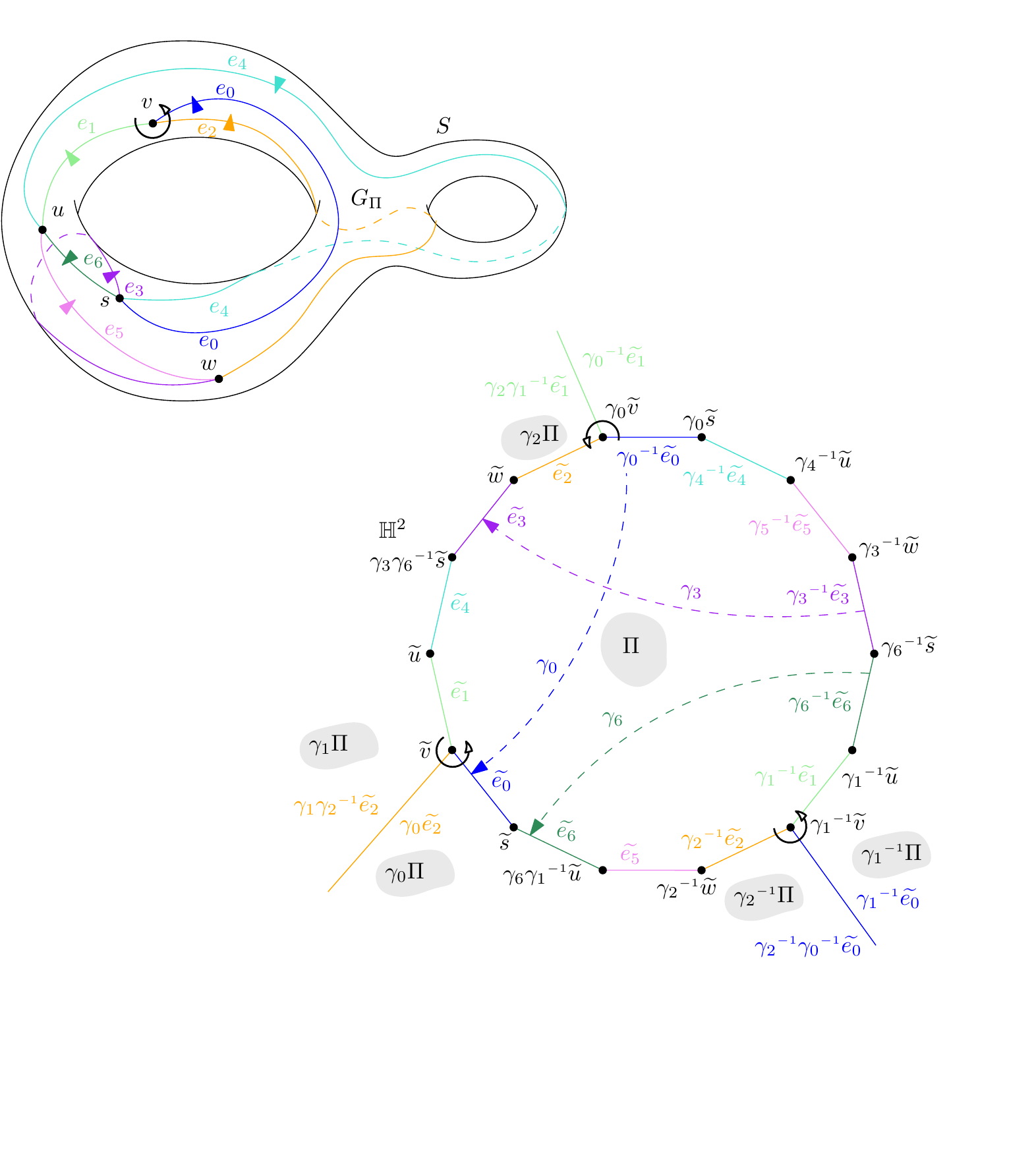}
\caption{(Top) The graph $\graph$. The arrow around vertex $v$ shows its incident edges. (Bottom) The fundamental polygon $\poly$. Vertices $\h{s},\h{u},\h{v}$, and $\h{w}$ of $\poly$ are chosen as representatives of the orbits of $s,u,v$, and $w$, respectively. The arrows show the combinatorics of the tiling $\gp\poly$ at the three vertices of $\poly$ in the orbit $\proj^{-1}{v}$: $\h{v}, \g_0\h{v}=\g_2\inv{\g_1}\h{v}$, and $\inv{\g_1}\h{v}=\inv{\g_2}\g_0$. }
	\label{fig:vertex}
\end{figure}

\section{Constructing the initial system of simple loops}\label{sec:vertreduction} 

The combinatorial part of Step~\ref{step:vertreduction} of the algorithm is quite common in the topology literature: it consists in computing a spanning tree $\tree$ of $\graph$, then the edges of $\tree$ are contracted, so that each vertex of $\tree$ is merged into the root, and each edge of $\graph$ that is not an edge of $\tree$ is transformed into a loop based at the root. This is illustrated in genus~2 by Figure~\ref{fig:tree}(Top). In this example, $\tree$ has three edges $e_1,e_5,$ and $e_6$. If $v$ is chosen as the root, edge $e_0$ transforms into a loop based at $v$ when $e_1$ and $e_6$ are contracted. 

However, topology is not enough in this work. We actually compute the geometry of each loop that is obtained from the contraction of $\tree$ by precisely computing a lift. 

The main result of this section is as follows:
\begin{proposition}\label{prop:initial-loops}
Let $\hypS$ be a closed orientable surface of genus $g$ and $\poly$ a fundamental polygon of $\hypS$ with $2m$ edges and side pairings as described in Section~\ref{sec:algo}. 
A system of loops based at a common point on $\hypS$, given by a circular list of geodesic segments in $\H$ and side pairings, can be constructed in time $O(g^3)$. The total length of this system of loops is $O(g L)$, where $L$ denotes the perimeter of $\poly$. 
\end{proposition}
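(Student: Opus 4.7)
The plan is to combine the standard topological spanning-tree contraction with explicit geometric bookkeeping in the universal cover.

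First I would build a spanning tree $\tree$ of $\graph$. By Section~\ref{sec:fpoly}, $\graph$ has $n_G\le 4g-2$ vertices and $m_G\le 6g-3$ edges; repeated application of Lemma~\ref{lem:incident-edges} extracts its combinatorial structure in $O(g^2)$ time, and a BFS within the same budget yields $\tree$. By Euler's relation, exactly $m_G-(n_G-1)=2g$ edges of $\graph$ are non-tree edges $e_0^*,\ldots,e_{2g-1}^*$. Pick any vertex $b'\in\graph$ as root; for each $e_i^*$ with endpoints $u_i,w_i$, let $\beta'_i$ be the loop obtained by concatenating the unique $\tree$-path from $b'$ to $u_i$, the edge $e_i^*$, and the $\tree$-path from $w_i$ back to $b'$. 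Contracting $\tree$ turns $\graph$ into a one-vertex graph with $2g$ loops, which by Euler's formula has exactly one face, so the $\beta'_i$ cut $\hypS$ into a disk and each is homotopic to a simple loop on $\hypS$.

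Next I would lift each $\beta'_i$ to $\H$. Fix a lift $\tilde b'$ of $b'$ and walk along $\beta'_i$ in $\graph$, maintaining the invariant $\tilde y=\gamma\cdot\point{y}$ between the current vertex $y\in\graph$, its running lift $\tilde y\in\H$, and the accumulated $\gamma\in\gp$. Each crossing of a polygon edge $x$ updates $\gamma$ via the letter $\alp{x}$ dictated by the $\pair{\cdot}$ rule of the data structure and outputs the next geodesic segment in $O(1)$ time. Since $\beta'_i$ consists of $O(g)$ edges and there are $2g$ loops, the full geometric realization---together with the element of $\gp$ encoding the side pairing of $\beta'_i$---is computed in $O(g^2)$ time.

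For the length bound, the edges of $\graph$ appear pairwise as sides of $\poly$, so their hyperbolic lengths sum to $L/2$. Each $\tree$-path therefore has length at most $L/2$, each $\beta'_i$ has length at most $3L/2$, and summing over the $2g$ loops gives a total of at most $3gL=O(gL)$.

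The remaining work is to assemble the new $4g$-sided fundamental polygon obtained by contracting $\tree$: each of its $4g$ sides is a concatenation of $O(g)$ geodesic segments, so the polygon has $O(g^2)$ segments overall. Setting up their prev/next/source/target/$\pair{\cdot}$/$\alp{\cdot}$ pointers---and in particular matching each segment with its partner on another side---costs $O(g)$ per segment by a naive scan, producing the announced $O(g^3)$ bound. The main obstacle is the consistency of the geometric lift: one must carefully verify that the invariant $\tilde y=\gamma\,\point{y}$ is preserved by the prescribed $\alp{\cdot}$-update each time the walk crosses a paired edge of $\poly$, so that $\gamma$ at the end of $\beta'_i$ equals the side-pairing generator for its two lifts starting at $\tilde b'$.
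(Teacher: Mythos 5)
Your proposal is correct and follows essentially the same route as the paper: compute a spanning tree of $\graph$, form the $2g$ tree-path loops based at the root, lift them to $\H$ while accumulating elements of $\gp$ through the data structure, derive the side pairings from these group elements, and bound the total length by observing that the edges of $\graph$ have total length $L/2$. The differences are only implementation-level — you lift each loop separately from a common lift of $b'$ instead of one continuous walk along the boundary of the new domain, your $O(1)$-per-crossing accounting is optimistic (turning at a vertex can sweep $O(g)$ corners of copies of $\poly$, as in Lemma~\ref{lem:incident-edges}, which still only gives the claimed $O(g^3)$), and you leave implicit the paper's final step of replacing each side-path by the single geodesic segment between its endpoint lifts of $b'$, which is immediate from the data you have computed.
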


The construction algorithm proceeds in three phases:
\renewcommand{\theenumi}{\it(\roman{enumi})}
\begin{enumerate}
\item \label{step:mst} Compute a spanning tree $\tree$ of $\graph$. A root $b$ is chosen for $\tree$, together with an edge $e_0$ incident to $b$ in $\graph\setminus\tree$ and lifts $\h{b}$ and $\h{e_0}$ in $\poly$. Up to a renaming of representatives in orbits, we can assume that $\h{b}=\source{\h{e_0}}$.
\item \label{step:newfp} Construct a new fundamental domain $\poly'$, as a polygon whose edges are paths consisting of $O(g)$ geodesic segments in $\gp\poly$: in each such path, one segment is a lift of an edge of $\graph\setminus\tree$, and the other segments are lifts of edges of $\tree$; the endpoints of each path lie in the orbit of $\h{v}$. The side pairings in $\poly'$ are also computed. 
\item \label{step:loops} Replace each path computed in the previous stage by the geodesic segment between its vertices and keep the side pairings. 
\end{enumerate}

Note that $\poly'$ is a fundamental domain, but not a fundamental polygon in the sense of Section~\ref{sec:fpoly}: its edges are paths consisting of several geodesic segments; the geodesic segments between its vertices (i.e., endpoints of these paths) will intersect in general, so they do not bound a fundamental domain. We will call such a polygon a \emph{topological polygon}. Section~\ref{sec:magic} will present the construction of a fundamental polygon from this topological polygon (Step~\ref{step:magic} of the algorithm).

The rest of this section is devoted to proving Proposition~\ref{prop:initial-loops}, by detailing the construction.

\begin{proof}
As in Section~\ref{sec:algo}, $n$ denotes the number of vertices of $\graph$. Phase~\ref{step:mst} is performed by a standard constuction of a minimum spanning tree $\tree$ in $O(m\log n)$ or $O(m+n\log n)$, i.e., $O(g\log g)$. The tree has $n-1$ edges. 

Phase~\ref{step:newfp} consists of walking along the edges of $\gp\poly$. The walk constructs the new fundamental domain $\poly'$ in counterclockwise order and stores it in a data structure that is very similar to the data structure defined in Section~\ref{sec:algo} for $\poly$. However some of its elements have a different meaning, which will be detailed in the sequel; in particular, the elements $\pair{\cdot}$ are actually not yet side pairings in this phase, but temporary elements of $\gp$. 

As a preprocessing step, for each edge $x$ of $\graph\setminus\tree$, we find the path $\pat{x}$ on $\hypS$ whose homotopy class contains the loop that will eventually replace $x$: it is given by the (unique and simple) path in $\tree$ from the root to a first vertex of $x$, followed by $x$, and finally by the path in $\tree$ from the second vertex of $x$ to the root. In the example of Figure~\ref{fig:tree}(Top), $e_0$ is replaced by a loop based at $b=v$ that is homotopic to the sequence $\pat{e_0}=e_0\cdot e_6\cdot e_1$, where~$\cdot$ denotes concatenation of paths. The path for edge $e_4$ is $\pat{e_4}=e_1\cdot e_4 \cdot e_6 \cdot e_1^{-1}$; here, edge $e_1$ is traversed in both directions.

The walk starts at $\h{b}$ and first considers edge $\h{e_0}$ chosen in stage~\ref{step:mst}. For each considered edge $\h{x}$ not in $\proj^{-1}{\tree}$, by the pre-processing we have just mentioned, we look for lifts of edges of $\pat{x}$ in order in $\gp\poly$. This is easily done by a sequence of operations $\nxt{\cdot},\prev{\cdot}$, and $\pair{\cdot}$ on edges of $\poly$, and turning around their vertices $\source{\cdot}$ and $\target{\cdot}$ as in Lemma~\ref{lem:incident-edges} until a lift of the next element of $\pat{x}$ is found. On the way, the elements $\alp{\cdot}$ of $\gp$ found in the data structure are collected so that the precise lift of each edge or vertex of $\poly'$ is known. 

Each time a lift of a path $\pat{x}$, i.e., an edge of $\poly'$, has been found, the algorithm proceeds to the next one. Note that edges (i.e., paths) appear on $\poly'$ in the same order as the order in which the corresponding edges appear on $\poly$: indeed, contracting the edges of $\tree$ does not change the order in which edges on $\hypS$ are traversed to describe the boundary of the face of~$\graph$. 

\begin{figure}[htbp]
	\centering
	\includegraphics[page=2]{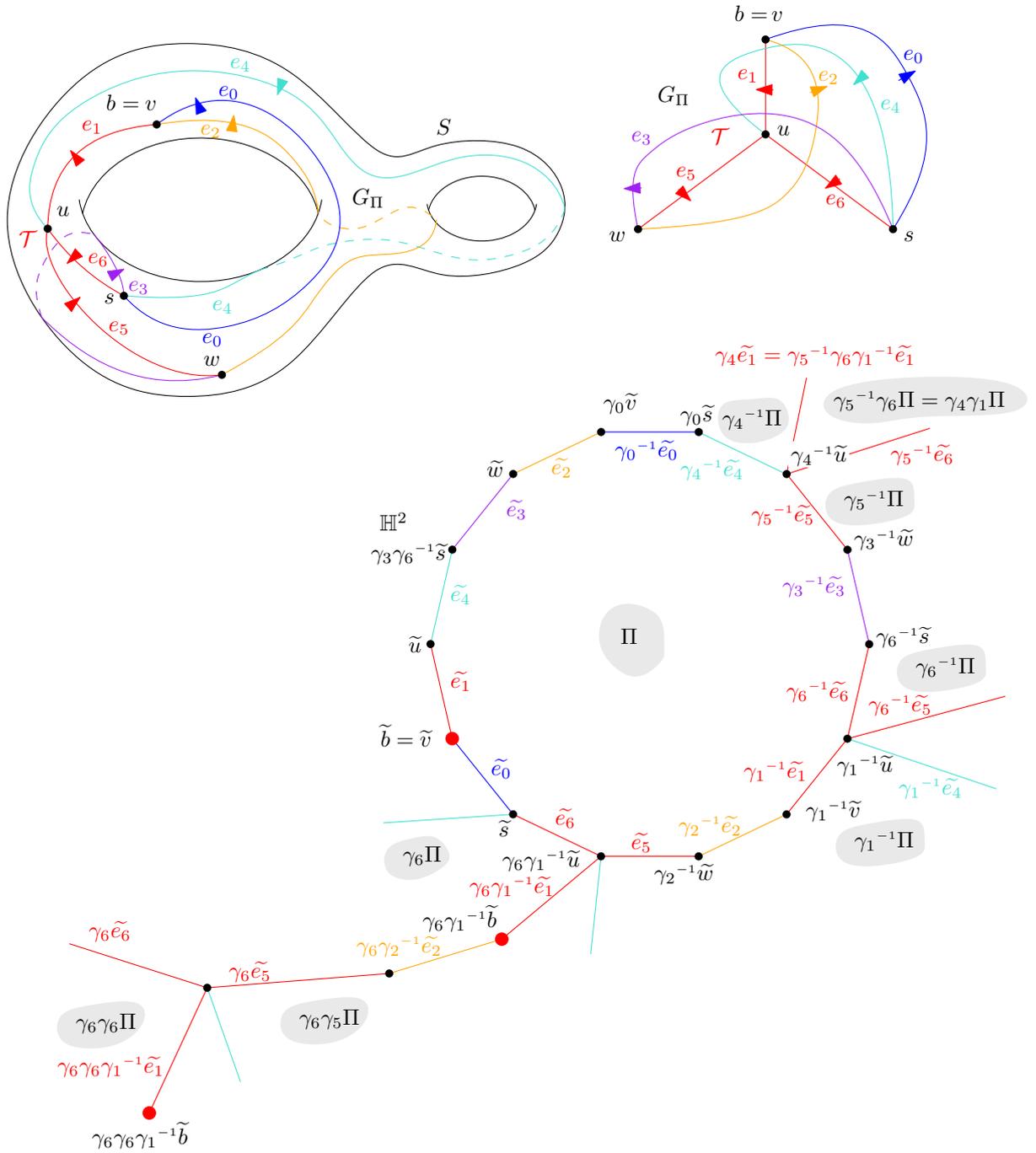}
	\caption{(Top) The spanning tree $\tree$ of $\graph$ has edges $e_1,e_5,e_6$ and is rooted at $v=b$. Each edge of $\graph\setminus\tree$ is replaced by a loop based at $b$ when contracting $\tree$. 
                     (Bottom) The path in $\H$ corresponding to the loop replacing $e_0$ starts at $\h{b}$ and ends at $\g_6\inv{\g_1}\h{b}$.}
	\label{fig:tree}
\end{figure}

This is illustrated in Figure~\ref{fig:tree}(Bottom). The walk starts from $\h{b}$ and follows $\h{e_0}$. The next edge in $\poly$ is $\nxt{\h{e_0}}=\h{e_6}$, which projects onto the next edge in $\pat{e_0}$. Then we must look for a lift of $e_1$ incident to $\target{\h{e_6}}$. This is done by going to $\pair{\h{e_6}}=\inv{\g_6}\h{e_6}$ and turning around its source vertex. The first edge in counterclockwise order is $\inv{\g_1}\h{e_1}$; the source vertex of its image $\g_6\inv{\g_1}\h{e_1}$ is the target vertex of $\h{e_6}$ and the walk traverses it. Its target vertex is $\g_6\inv{\g_1}\h{b}$, which is in the orbit of $\h{b}$. We have now found the lift of $\pat{e_0}$ in $\gp\poly$ that forms the first edge of $\poly'$: it is the sequence $\h{e_0} \cdot \h{e_6} \cdot \g_6\inv{\g_1}\h{e_1}$. From vertex $\g_6\inv{\g_1}\h{b}$ we will now construct the edge of $\poly'$ corresponding to $\h{e_2}$, as $\h{e_2}$ is the edge following $\inv{\g_0}\h{e_0}$ in $\poly\setminus\proj^{-1}{\tree}$. We know that $\pat{e_2}=e_2 \cdot e_5 \cdot e_1$. Here, turning around the source vertex of $\inv{\g_1}\h{e_1}$ gives $\inv{\g_2}\h{e_2}$, and the walk continues with $\g_6\inv{\g_2}\h{e_2}$, then turning around the source of $\inv{\g_2}\h{e_2}$ we find $\h{e_5}$, and the walk follows $\g_6\h{e_5}$. So far we have only followed edges of $\g_6\Pi$, as the edge that we were looking for when turning around vertices was always the first one. However, this is not the case after $\g_6\h{e_5}$. The target of the representative $\inv{\g_5}\h{e_5}$ is $\inv{\g_4}\h{u}$, around which we must turn until we find a lift of $e_1$; the next edge of $\gp\Pi$ that we follow is thus $\g_6\g_5\inv{\g_5}\g_6\inv{\g_1}\h{e_1}=\g_6\g_6\inv{\g_1}\h{e_1}$, which finishes the edge of $\poly'$ corresponding to $\h{e_2}$. Next, we would continue with $\h{e_3}$ in the same vein. And so on. 

Note that, as we are constructing the fundamental domain $\poly'$, following the order of the edges of $\poly$, each edge $\h{e'_i}$ of $\poly'$ defines a topological loop $\beta_i'$ based at $b$ on $\hypS$, which represents the homotopy class $[e'_i]$. Such an edge $\h{e'_i}$ is formed by a sequence of edges of $\gp\poly$ that corresponds to the path $\pat{e_i}$, for $e_i\in\graph\setminus\tree$, and will naturally be paired with another sequence for the same $\pat{e_i}$ (traversed in the opposite direction around $\poly'$). The words associated with the edges in the two sequences differ by an element of $\gp$, which gives the side pairing $\gamma_i'\in\gp$ for $\poly'$. 

Phase~\ref{step:loops} is easy. It consists in replacing each edge of $\poly'$ by the geodesic segment between its two vertices, and keeping the associated side pairings. As mentioned above, the corresponding geodesic loops may intersect on $\hypS$, though the topological loops that we choose to represent their homotopy classes only intersect at their common basepoint. 

As the edges of $\poly'$ project by construction to loops, all based at the same point, there are $2g$ such loops on $\hypS$ and $\poly'$ has $4g$ edges, each consisting of $O(g)$ edges (and vertices) of $\gp\poly$. By Lemma~\ref{lem:incident-edges}, $O(g)$ operations are performed at each vertex. This shows the complexity annonced in Proposition~\ref{prop:initial-loops}. The bound on the sum of the lengths of the geodesic loops also follows directly.
\end{proof}

Note that during the traversal detailed in the proof, we have computed for each vertex $\h{x}$ of $\poly'$ the element $\g\in\gp$ such that $\h{x}=\g \h{b}$. We store these elements of $\gp$ in a table $\tab$, which will be used in the sequel, in addition to the data main data structure. 

We denote the output of this step~\ref{step:magic} as follows: we re-index the sides of the topological polygon $\poly'$ (which has $4g\leq 2m$ edges) so that the side pairings are denoted as $\g'_0,\ldots,\g'_{2g-1}$ and the corresponding $2g$ topological loops on $\hypS$ are $\beta'_0,\ldots,\beta'_{2g-1}$; these loops on $\hypS$ do not intersect except at their common basepoint $b$, which is now renamed to $b'$ for global consistency of notation, as announced at the beginning of Section~\ref{sec:algo}.

\section{Finding an embedded system of loops}\label{sec:magic}
We want to find a collection of geodesic loops on a hyperbolic surface $\hypS$, all based in a single point and disjoint otherwise, and such that the complementary region of the loops is a convex hyperbolic polygon. What we show is that in fact we can retain the choice of topological loops $\beta'_0,\ldots,\beta'_{2g-1}$ made in Section~\ref{sec:vertreduction} by moving the basepoint appropriately to ensure that their geodesic realizations satisfy the desired properties.

Consider the set of topological loops $\beta'_0,\hdots, \beta'_{2g-1}$ all based at point $b'$ constructed in the previous section. We choose a pair that intersects minimally exactly once which, up to reordering, we can suppose are $\beta'_0$ and $\beta'_1$. For future reference we set $L_0:= \max\{ \ell(\beta'_0), \ell(\beta'_1)\}$, where $\ell$ denotes the length. 
\begin{remark}\label{rem:inter}
We can fix any loop to be $\beta'_0$ and find a loop $\beta'_1$ intersecting it exactly once. Indeed, the set $\beta'_0,\hdots, \beta'_{2g-1}$ contains curves that pairwise intersect at most once, and are all non-separating and thus homologically non-trivial. As it generates homotopy, it also generates homology and in particular every curve must be intersected by at least one other curve. As they can intersect at most once, they intersect exactly once. 
\end{remark} 

We begin by taking the unique geodesic loops, based in $b'$, in the free homotopy classes of $\beta'_0$ and $\beta'_1$, and we replace the curves with these geodesic representatives (we keep the same notation for convenience). Now we further consider the unique simple closed geodesic representatives in the free homotopy class of $\beta'_0$ and $\beta'_1$, which we denote $\beta''_0$ and $\beta''_1$, respectively. By hypothesis, they intersect in a single point $b''$, which will be our new basepoint. 

We now define a path between $b''$ and $b'$ as follows. We consider a single lift $\h{\beta'_0}$ of $\beta_0'$. Its endpoints both correspond to distinct lifts of $b'$ which are related by a unique translation $g_0$ in $\gp$. The copies of $\h{\beta'_0}$ by iterates of $g_0$ form a broken geodesic line with the same end points at infinity as the geodesic axis of $g_0$. This singular geodesic, which we denote $\hat{\beta_0'}$, separates $\H$ into two half-spaces, only one of which is convex. We now choose an endpoint of $\h{\beta'_0}$ and consider a lift of $\beta_1'$ that lies in the convex half-space. This lift we denote by $\h{\beta'_1}$ and, as before, we consider the corresponding translation $g_1$ in $\gp$ and its geodesic axis and its corresponding singular geodesic $\hat{\beta_1'}$. Now, we obtain $\h{b''}$ as the intersection of the axes of $g_0$ and $g_1$. We consider the unique geodesic path $\h{c}$ between $\h{b'}$ and $\h{b''}$ and its projection $c$ on $\hypS$. We first observe that we can control the length of this path $c$:

\begin{lemma}\label{lem:betalength}
	$\ell(c) < 2 L_0$.
\end{lemma}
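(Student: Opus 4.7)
The plan is as follows. Since $\h{c}$ is a geodesic segment in $\H$ projecting isometrically to $c$, we have $\ell(c) = d_\H(\h{b'}, \h{b''})$, so it suffices to bound this hyperbolic distance. The strategy is to exploit the Saccheri quadrilateral geometry forced by each pair $(\h{\beta'_i}, A_i)$ for $i=0,1$, together with the constraint that both axes meet at $\h{b''}$.

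Concretely, let $p_i$ denote the orthogonal projection of $\h{b'}$ onto the axis $A_i$ of $g_i$, and set $d_i := d_\H(\h{b'}, A_i)$. Since $g_i$ acts on $A_i$ as a translation of length $\ell(\beta''_i)$, and $\h{\beta'_i}$ is a geodesic arc of length $\ell(\beta'_i)$ joining $\h{b'}$ to $g_i\h{b'}$ with both endpoints at the same distance $d_i$ from $A_i$ by equivariance, the four points $\h{b'}, g_i\h{b'}, g_ip_i, p_i$ form a Saccheri quadrilateral with legs of length $d_i$, base of length $\ell(\beta''_i)$ along $A_i$, and summit of length $\ell(\beta'_i)$. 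The standard half-angle identity yields
\[
\sinh(\ell(\beta'_i)/2) \;=\; \cosh(d_i)\,\sinh(\ell(\beta''_i)/2).
\]
Since $\h{b''}\in A_0\cap A_1$, applying the hyperbolic Pythagorean theorem to each right triangle $\h{b'}\,p_i\,\h{b''}$ (right angle at $p_i$) gives $\cosh d_\H(\h{b'},\h{b''}) = \cosh(d_i)\cosh d_\H(p_i,\h{b''})$ for $i=0,1$.

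The bound $d_\H(\h{b'},\h{b''})<2L_0$ is then extracted by combining these identities with the angular information at $\h{b''}$. Denote by $\theta_i$ the angle at $\h{b''}$ between $\h{c}$ and $A_i$, and by $\alpha$ the angle between $A_0$ and $A_1$ at $\h{b''}$. The right-triangle sine rule gives $\sin\theta_i = \sinh(d_i)/\sinh d_\H(\h{b'},\h{b''})$, and the placement of $\h{\beta'_1}$ in the convex half-space bounded by $\hat{\beta_0'}$ localizes $\h{b'}$ inside the wedge between the axes, so that $\theta_0+\theta_1=\alpha$. The system in $d_\H(\h{b'},\h{b''})$, the $d_i$'s, the $\theta_i$'s and $\alpha$ can then be solved (or estimated) to derive the bound.

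The main obstacle is ensuring uniformity when one of the simple closed geodesics $\beta''_i$ is short: by the Saccheri identity $d_i$ can then become large, but this is compensated by $d_\H(p_i,\h{b''})$ becoming small because the other axis constraint pins $\h{b''}$ near $p_i$. Packaging this cancellation cleanly so as to produce the strict inequality is the most delicate step. A possible alternative route, bypassing the trigonometric system, is to construct an explicit piecewise-geodesic path from $\h{b'}$ to $\h{b''}$ of length strictly less than $\ell(\beta'_0)+\ell(\beta'_1)\le 2L_0$, by first following a prefix of $\h{\beta'_0}$ (or $\h{\beta'_1}$) and then a short arc along the corresponding axis to $\h{b''}$.
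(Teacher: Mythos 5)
Your primary route has a genuine gap and, as stated, cannot be completed. The relations you list --- the Saccheri identity $\sinh(\ell(\beta'_i)/2)=\cosh(d_i)\sinh(\ell(\beta''_i)/2)$, the Pythagorean relation $\cosh d_{\H}(\h{b'},\h{b''})=\cosh(d_i)\cosh d_{\H}(p_i,\h{b''})$, and the angle relation $\theta_0+\theta_1=\alpha$ --- do not imply $d_{\H}(\h{b'},\h{b''})<2L_0$. Note that $\ell(c)\ge d_1$, and the Saccheri identity alone lets $d_1$ exceed $\ell(\beta'_1)$ by an arbitrary amount when $\ell(\beta''_1)$ is small: for instance $\ell(\beta'_1)=10$ and $\ell(\beta''_1)=10^{-20}$ force $d_1\approx 50$. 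One can build abstract configurations in $\H$ (two crossing axes with prescribed translation lengths and a point $\h{b'}$ in the wedge) satisfying every relation in your system while $d_{\H}(\h{b'},\h{b''})>2L_0$. What rules such configurations out in the actual setting is exactly the input your system never uses: $\beta'_0$ and $\beta'_1$ intersect once, which forces (collar-lemma style) a geodesic crossing a very short $\beta''_1$ to be long, and more directly forces the axis of $g_1$ to cross the lift $\h{\beta'_0}$ itself. So the step you defer ("packaging this cancellation cleanly") is not a technicality; it is the entire proof, and solving the trigonometric system cannot supply it.

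Your closing "alternative route" is in fact the paper's argument: the axis of $g_1$ intersects $\h{\beta'_0}$ (and symmetrically the axis of $g_0$ intersects $\h{\beta'_1}$), so one follows an arc of $\h{\beta'_0}$ from $\h{b'}$ to that intersection point (length $<\ell(\beta'_0)$) and then a subarc of the axis of $g_1$ to $\h{b''}$ of length at most the translation length $\ell(\beta''_1)\le\ell(\beta'_1)$, giving $\ell(\h{c})<\ell(\beta'_0)+\ell(\beta''_1)\le 2L_0$. But in your write-up this is only floated as a possibility: you neither justify that the axis meets $\h{\beta'_0}$ (this is where the one-intersection hypothesis and the choice of the lift $\h{\beta'_1}$ in the convex half-space of $\hat{\beta_0'}$ enter), nor why the along-axis arc from that crossing point to $\h{b''}$ is "short", i.e.\ bounded by $\ell(\beta''_1)$. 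As it stands, both routes are plans rather than proofs, and the one you actually develop is the one that cannot yield the bound.
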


\begin{proof}
We observe that the axis of $g_0$ must lie in an $R$ neighborhood of $\hat{\beta_0'}$ where $R < \ell(\beta_1')$. In particular, the axis of $g_1$ intersects $\h{\beta'_0}$. Similarly, the axis of $g_0$ intersects $\h{\beta'_1}$. 
Now the proof essentially follows from drawing a picture of the above situation in $\H$ (see Figure \ref{fig:photo}). 

\begin{figure}[htbp]
	\centering
	\includegraphics[page=1]{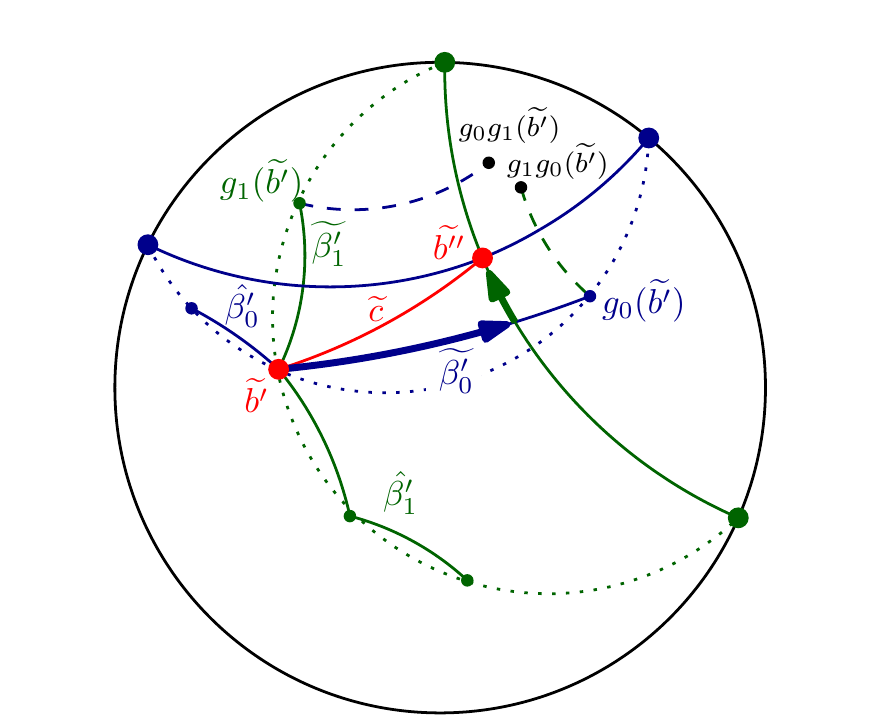}
	\caption{$c$ is homotopic on $\hypS$ to the projection of the concatenation of the bold arc of $\h{\beta'_0}$ from $\h{b'}$ and the bold segment of the axis of $g_1$.}
	\label{fig:photo}
\end{figure}

By following an arc of $\h{\beta'_0}$ from $\h{b'}$ and then a segment of length at most $\ell(\h{\beta''_1})$ on the axis $g_1$, we obtain a path between $\h{b'}$ to $\h{b''}$. 
As such, we have
\[
\hspace{13em}\ell(\h{c}) < \ell(\h{\beta'_0}) +  \ell(\h{\beta''_1})
\]
and so by passing to the surface
\[
\hspace{15em}\ell(c) < 2 L_0.
\]
\end{proof}

Observe that for $i=0,1$, $\beta''_i$, based in $b''$, is freely homotopic to $c^{-1} \cdot \beta'_i \cdot c$ and that there is a homeomorphism of $\hypS$, isotopic to the identity, that takes $b'$ to $b''$ and that sends (the homotopy class of) $\beta'_i$ to $\beta''_i$. This homeomorphism is often referred to as the point pushing map (see for instance \cite[Section~4.2]{primermcgs} for details).
	
We can apply this same homeomorphism to the remaining curves. For $i=0,\hdots, 2g-1$ we set the homotopy class of loop $\beta''_i$ to be:
\begin{equation}\label{eq:conjug}
[\beta''_i]_{b''}= [c^{-1} \cdot \beta'_i \cdot c]_{b''}.
\end{equation}
As we have just moved the basepoint by a homeomorphism, the homotopy classes $[\beta''_i]_{b''}$ all have simple representatives and can be realized disjointly outside of $b''$. The following lemma implies that their unique geodesic representatives enjoy this same property. It is well known to specialists, but we include a proof sketch for completeness. 

\begin{lemma}\label{lem:convex}
	Let $\Sigma$ be a hyperbolic surface with piecewise-geodesic boundary such that the interior angles on the singular points $s_0,\hdots,s_{k-1}$ of the boundary are cone points of angle $\leq \pi$. If $[\alpha]_{p_i,q_i}, [\alpha']_{p_j,q_j}$ are simple homotopy classes of paths (with endpoints $p_i,p_j,q_i,q_j$ in the set $s_0,\hdots,s_{k-1}$), and disjoint except for possibly in their endpoints, then the unique geodesic representatives are also simple and disjoint. 
\end{lemma}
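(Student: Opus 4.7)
The plan is to lift the whole configuration to the universal cover $\widetilde{\Sigma}$ and reduce the statement to the standard fact that distinct geodesic segments in the hyperbolic plane meet in at most one point, combined with the homotopy-invariance of the intersection number of arcs with fixed endpoints.

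The first and key step is to prove that the developing map embeds $\widetilde{\Sigma}$ isometrically into $\H$ as a globally convex region. The hypothesis that every interior angle at a singular point of $\partial\Sigma$ is at most $\pi$ is exactly the local convexity of the piecewise-geodesic boundary. In the universal cover, $\partial\widetilde{\Sigma}$ is a union of geodesic arcs of $\H$ meeting at lifts of the cone points with interior angles still at most $\pi$, and a standard Cartan--Hadamard-type argument for simply connected, locally convex, piecewise-geodesic domains gives that $\widetilde{\Sigma}$ is globally convex in $\H$. In particular, the $\H$-geodesic between any two points of $\widetilde{\Sigma}$ stays in $\widetilde{\Sigma}$, so the geodesic representative in $\Sigma$ of any homotopy class of paths with endpoints at cone points lifts to the unique $\H$-geodesic segment between the corresponding endpoint lifts.

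Given this, fix a lift $\widetilde{\alpha}$ of the simple representative of $[\alpha]_{p_i,q_i}$ and let $\widetilde{\alpha^\ast}$ be the $\H$-geodesic between its endpoints; analogously define $\widetilde{\alpha'}$ and $\widetilde{\alpha'^\ast}$. To prove that $\alpha^\ast$ and $\alpha'^\ast$ are disjoint on $\Sigma$ outside of common endpoints, it suffices to show that for every $g\in\pi_1(\Sigma)$ the segments $\widetilde{\alpha^\ast}$ and $g\widetilde{\alpha'^\ast}$ are disjoint in $\widetilde{\Sigma}$ except possibly at common endpoint lifts. The topological lifts $\widetilde{\alpha}$ and $g\widetilde{\alpha'}$ have this property because $\alpha$ and $\alpha'$ are simple and disjoint on $\Sigma$ (except at common endpoints). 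The geodesic versions share the same endpoint pairs and are thus homotopic rel endpoints to the topological lifts, so their algebraic interior intersection number is zero. But any single interior transverse intersection of two geodesic segments in $\H$ would contribute $\pm 1$, and two distinct $\H$-geodesics meet in at most one point, which forces the interior intersection to be empty. The same argument applied to $\widetilde{\alpha^\ast}$ and $g\widetilde{\alpha^\ast}$ for $g\in\pi_1(\Sigma)\setminus\{1\}$, using the simplicity of $\alpha$, yields the simplicity of $\alpha^\ast$.

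The main obstacle is the first step: cleanly justifying that the universal cover of a hyperbolic surface with locally convex piecewise-geodesic boundary develops as a globally convex region of $\H$. Once this convexity is in hand, the rest is a direct intersection-number / bigon argument in $\H$, and the extension from two paths to a whole finite collection is immediate by applying the pairwise statement to each pair.
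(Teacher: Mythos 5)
Your first step (the angle condition makes $\widetilde{\Sigma}$ develop onto a globally convex region of $\mathbb{H}^2$, so geodesic representatives lift to straight chords) is exactly the paper's key point. The gap is in your intersection step: the ``homotopy-invariance of the intersection number of arcs with fixed endpoints'' that you invoke is not a valid general principle, and it fails precisely in the cases that matter here. If two arcs share an endpoint, an interior crossing can be created or destroyed at that shared endpoint during a homotopy rel endpoints, so the signed interior count of the geodesic lifts need not equal that of the topological lifts; and when the four endpoints are distinct, the count is only invariant if the homotopy never sweeps the moving arc across an endpoint of the other arc, which you do not arrange. A concrete sanity check that your step proves too much: take a closed hyperbolic surface, any basepoint $p$, and two simple loops based at $p$ and disjoint outside $p$. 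The universal cover is all of $\mathbb{H}^2$ (convex), the based geodesic loops lift to geodesic chords joining lifts of $p$, and these are homotopic rel endpoints to the pairwise disjoint lifts of the topological loops; your argument would conclude that the geodesic loops are disjoint outside $p$ for \emph{every} basepoint, which is false --- this is exactly the phenomenon flagged in Section~\ref{sec:curves} (geodesic representatives rel endpoints need not intersect minimally) and the reason the paper has to move the basepoint in Section~\ref{sec:magic}. Note that in this test case your argument never uses the hypothesis beyond convexity of the cover, which signals that the essential input is missing.

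The missing input is that the endpoints lie on the \emph{boundary} of the convex region $\widetilde{\Sigma}$, and a case distinction. For two lifts sharing an endpoint, interior disjointness of the geodesic chords is automatic (two distinct geodesics of $\mathbb{H}^2$ meet in at most one point, and they already meet at the common endpoint), so no invariance argument is needed there. For lifts with four distinct endpoints, the paper's route is the clean one: two geodesic chords of a convex region cross in their interiors only if their endpoint pairs link in the cyclic order on the boundary, and linked pairs force \emph{any} two arcs in the region joining them to cross, contradicting the disjointness of the topological lifts. Alternatively, your signed-count argument can be repaired in this case by pushing the topological lifts into the interior and using convexity to keep the geodesic homotopy's interior inside $\operatorname{int}\widetilde{\Sigma}$, hence away from the other chord's endpoints on $\partial\widetilde{\Sigma}$. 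Without one of these two fixes (and the shared-endpoint case handled separately), the central step of your proof does not go through as written.
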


\noindent\emph{Sketch of proof.}
	We consider $\h{\Sigma}$, the universal cover of $\Sigma$, which we view as a (geodesically convex) subset of $\H$. We lift $\partial \Sigma$ to $\h{\Sigma}$ and representatives of $[\alpha]_{p_i,q_i}$ and $[\alpha']_{p_j,q_j}$, which are simple and disjoint, to the universal cover. Observe that being simple and disjoint is equivalent to {\it all} individual lifts in $\H$ being simple and pairwise disjoint. Now take two individual lifts of either $\alpha$ or $\alpha'$, and their unique geodesic representatives. We will see that they are also disjoint. Note that in general, given two simple disjoint paths in the hyperbolic (or Euclidean) plane, the unique geodesics between their endpoints might intersect (as already mentioned in Section~\ref{sec:curves}). However:
	
	\noindent{\underline{Observation:}} {\it Let $C\subset \H$ be a convex with non-empty boundary, and  $p_0, q_0, p_1, q_1 \in \partial C$. Let $\alpha_1:[0,1]\to C$ and $\alpha_2:[0,1]\to C$ be simple paths, disjoint in their interior, with $\alpha_0(0)=p_0, \alpha_0(1)=q_0$ and $\alpha_1(0)=p_1, \alpha_1(1)=q_1$. Then the unique geodesic between $p_0$ and $q_0$ and the unique geodesic between $p_1$ and $q_1$ are disjoint in their interior as well.}
		
	A key point is that, thanks to the angle condition on the cone points, $\h{\Sigma}$ is a convex region of $\H$. (This is just a slightly more sophisticated observation than the elementary fact that a polygon with all interior angles less than $\pi$ is convex.) The observation now implies that the lifts of geodesics corresponding to $\alpha$ and $\alpha'$ are disjoint in their interior if and only if there are representatives of $[\alpha]_{p_i,q_i}$ and $[\alpha']_{p_j,q_j}$ that are, too, which, by hypothesis, is the case. 
\hfill $\qedsymbol$

\medskip

We can now apply Lemma \ref{lem:convex} to the geodesic representatives of $[\beta''_i]_{b''}$. For simplicity we denote by $\beta''_i$ the unique geodesic loop in the corresponding homotopy class.

\begin{theorem}\label{thm:betagood} Let $\beta''_0,\hdots,\beta''_{2g-1}$ be a set of topological loops based in $b''$ that cuts a surface $\hypS$ into a disk. Assume that $\beta''_0$ and $\beta''_1$ are closed geodesics. Then, the geodesic loops homotopic to $\beta''_0,\hdots,\beta''_{2g-1}$ are simple and pairwise disjoint in their interiors. Furthermore, by cutting $S$ along those geodesics and lifting to $\H$, one obtains a convex hyperbolic polygon with $4g$ edges.
\end{theorem}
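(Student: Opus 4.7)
The plan is to cut $\hypS$ along $\beta''_0\cup\beta''_1$, apply Lemma~\ref{lem:convex} on the resulting surface to upgrade the topological control of the remaining loops to geometric control, and then verify convexity of the lifted polygon in $\H$ from the angle structure at $b''$.

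Since the hypothesis forces $\beta''_0$ and $\beta''_1$ to be simple closed geodesics meeting only at $b''$ (and transversally there), the four angles formed at $b''$ come in two pairs of vertical angles $\theta$ and $\pi-\theta$ with $\theta\in(0,\pi)$, each strictly less than $\pi$. Cutting $\hypS$ along $\beta''_0\cup\beta''_1$ yields a hyperbolic surface $\Sigma$ with piecewise-geodesic boundary whose four corners are copies of $b''$ with interior angles in $(0,\pi)$; hence $\Sigma$ fits the hypotheses of Lemma~\ref{lem:convex}. For $i\ge 2$, the topological loop $\beta''_i$ lifts to a simple path on $\Sigma$ between two corners, and the family of such paths is pairwise disjoint away from corners because together they cut $\hypS$ into a disk. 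Lemma~\ref{lem:convex} then guarantees that the unique geodesic representatives of these path-homotopy classes are simple and pairwise disjoint in $\Sigma$. Since $\h{\Sigma}$ is a convex subset of $\H$, each such geodesic path projects to the unique geodesic loop in $[\beta''_i]_{b''}$ on $\hypS$; the resulting geodesic loops are simple, pairwise disjoint in their interiors, and disjoint from $\beta''_0\cup\beta''_1$ away from $b''$ (an interior intersection would force $\beta''_i$ to coincide as a geodesic with $\beta''_0$ or $\beta''_1$, contradicting distinctness of the underlying curves in the system).

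The geodesic loops $\beta''_0,\ldots,\beta''_{2g-1}$ thus form a simple system, pairwise disjoint away from $b''$, isotopic to the topological one; cutting $\hypS$ along them yields a disk, which lifts to a hyperbolic polygon $P\subset\H$ with $4g$ geodesic sides and $4g$ vertices, all lifts of $b''$. Each interior angle of $P$ equals the wedge at $b''$ between two consecutive half-edges in the cyclic order around $b''$, and these $4g$ wedge angles sum to $2\pi$. The four tangent directions of $\beta''_0$ and $\beta''_1$ at $b''$, anti-parallel in pairs, partition the tangent circle at $b''$ into four open ``big wedges'' of angles $\theta,\pi-\theta,\theta,\pi-\theta$, each less than $\pi$. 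For $i\ge 2$, because $\beta''_i$ is a geodesic loop disjoint from $\beta''_0\cup\beta''_1$ away from $b''$, both of its tangent directions at $b''$ lie in the interior of these big wedges (otherwise $\beta''_i$ would share a geodesic segment with, and hence coincide with, $\beta''_0$ or $\beta''_1$). Consequently each small wedge realizing an interior angle of $P$ is contained in a single big wedge and is strictly less than $\pi$, so $P$ is convex.

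The main obstacle is the convexity of $P$: it relies crucially on $\beta''_0$ and $\beta''_1$ being smooth closed geodesics through $b''$ so that their tangent directions are anti-parallel in pairs, which is what forces all four big wedges to have angles less than $\pi$; this bound then propagates to all smaller wedges cut out by the tangent directions of the remaining loops.
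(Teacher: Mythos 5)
Your proof is correct and takes essentially the same route as the paper: cut along the closed geodesics $\beta''_0,\beta''_1$ to obtain a surface whose four corner angles at $b''$ are $<\pi$, apply Lemma~\ref{lem:convex} on this cut surface to make the remaining geodesic representatives simple and pairwise disjoint, and conclude convexity of the resulting $4g$-gon from all interior angles being $<\pi$. The only difference is presentational: the paper adds the geodesics one at a time, observing that each new segment further splits angles already $<\pi$, while you apply the lemma pairwise in one step and justify the same angle bound explicitly via the wedge/tangent-direction analysis at $b''$.
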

    
\begin{proof}	
	As $\beta''_0$ and $\beta''_1$ are closed geodesics, they form $4$ angles in $b''$, and the opposite ones are equal. These angles thus satisfy $2\theta+2 \theta'= 2\pi$ so in particular both $\theta$ and $\theta'$ are strictly less than $\pi$. Thus by cutting along $\beta''_0$ and $\beta''_1$, we obtain a genus $g-1$ surface with a boundary consisting of $4$ geodesic segments, and with $4$ cone point singularities of angles~$<\pi$ (see Figure~\ref{fig:cutting}). 
	
	\begin{figure}[htbp]
		\centering
		\includegraphics[width=0.9\textwidth]{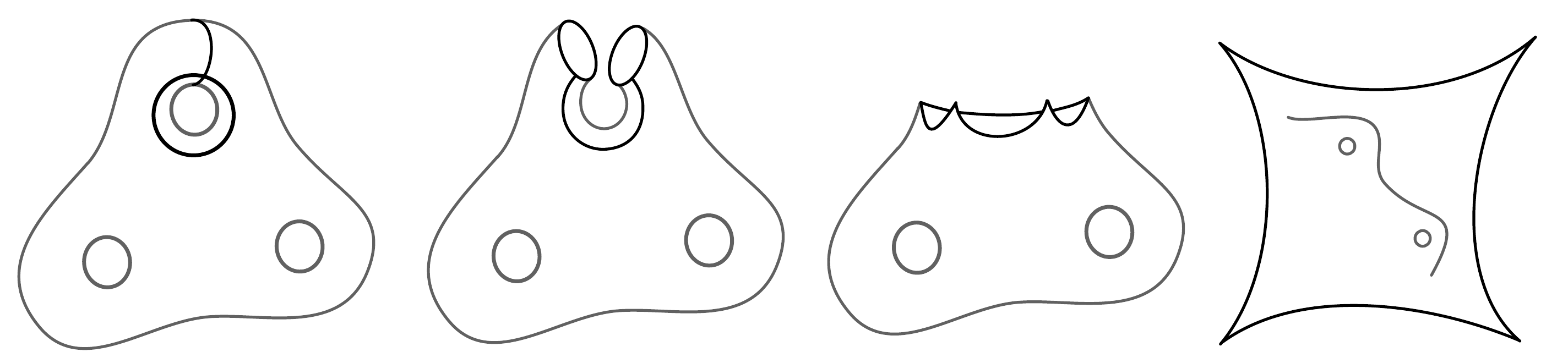}
		\caption{A visualization of the cutting along $\beta''_0$ and $\beta''_1$.}
		\label{fig:cutting}
	\end{figure}
	
	We now proceed inductively for $i\geq 3$ and consider the unique geodesic path $\beta''_i$, which by virtue of Lemma \ref{lem:convex}, has disjoint interior from the previous geodesic segments. Furthermore, as each segment further splits the angles, the angles are all less than $\pi$. 
	
	The end result is a polygon with all interior angles less than $\pi$ which, by elementary hyperbolic geometry, is convex.\end{proof}

\begin{proposition}\label{prop:convex-polygon}
	Let $\hypS$ be hyperbolic of genus $g$ and $\poly'$ a topological fundamental polygon of $\hypS$ with $4g$ edges and side pairings as described at the end of Section~\ref{sec:vertreduction}. 
	A convex fundamental polygon $\poly''$ with its side pairing and whose vertices project to a single vertex on $\hypS$, can be constructed in $O(g)$ time. The perimeter of $\poly''$ is $O(g L')$, where $L'$ denotes the total length of the sides of $\poly'$. 
      \end{proposition}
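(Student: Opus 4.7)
The plan is to implement the construction already outlined in Section~\ref{sec:magic} and verify the complexity and perimeter bounds; the geometric properties (convexity, single-vertex projection, $4g$ edges) will be immediate consequences of Theorem~\ref{thm:betagood}.

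First I would identify a pair of loops among $\beta'_0,\ldots,\beta'_{2g-1}$ that intersect exactly once, which exists by Remark~\ref{rem:inter}. The data structure produced by Proposition~\ref{prop:initial-loops} records the cyclic ordering of the $4g$ edges of $\poly'$ around the common basepoint $b'$, so checking whether two geodesic loops based at $b'$ intersect transversally in their interior reduces to testing whether their four edge-ends at $b'$ interleave in that cyclic order, a constant-time query. Fixing $\beta'_0$ and scanning the $2g-1$ candidates for $\beta'_1$ therefore takes $O(g)$ time. I would then compute in $O(1)$ the simple closed geodesic representatives $\beta''_0,\beta''_1$, their intersection point $b''$, and the geodesic segment $c$ from $\h{b'}$ to a chosen lift $\h{b''}$. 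For each remaining index $i\geq 2$, the homotopy class $[c^{-1}\cdot\beta'_i\cdot c]_{b''}$ corresponds to the side pairing $\g'_i$ conjugated by the element of $\gp$ that sends $\h{b'}$ to $\h{b''}$; this element is recovered from the table $\tab$ built at the end of Section~\ref{sec:vertreduction} together with the position of $\h{b''}$. Passing to the unique geodesic representative is a single geometric operation, so this phase costs $O(g)$ and gives the overall bound.

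For the perimeter, Lemma~\ref{lem:betalength} gives $\ell(c)<2L_0$, and $L_0\leq L'/2$ since $L'$ is twice the sum of the loop lengths. For $i\geq 2$, the path $c^{-1}\cdot\beta'_i\cdot c$ is a representative of $[\beta''_i]_{b''}$, so $\ell(\beta''_i)\leq \ell(\beta'_i)+2\ell(c)\leq \ell(\beta'_i)+2L'$; for $i=0,1$, the simple closed geodesic $\beta''_i$ is at most as long as the geodesic loop $\beta'_i$. Summing over the $2g$ loops and multiplying by two (each loop bounds two edges of $\poly''$) yields a perimeter at most $L'+8gL' = O(gL')$. The main subtlety is the first step: Remark~\ref{rem:inter} guarantees existence of the intersecting pair topologically, but the algorithmic realization hinges on the fact that the geometric intersection number of two geodesic loops sharing a basepoint is determined combinatorially by the cyclic order of their four ends at that basepoint, which is exactly what the combinatorial map stores and what keeps the scan linear in $g$.
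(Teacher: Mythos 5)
Your overall route is the same as the paper's: pick the intersecting pair by scanning the cyclic order of edge-ends at $b'$ (this is exactly how the paper implements Remark~\ref{rem:inter}, taking for $\beta'_1$ a loop whose ends alternate with those of $\beta'_0$), compute the two closed geodesics, their intersection $b''$ and the connecting segment $c$ in $O(1)$, transport the combinatorics of $\poly'$ using the table $\tab$, and bound the perimeter via Lemma~\ref{lem:betalength}; your $O(gL')$ estimate matches the paper's.

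One step, however, is wrong as written and would make the construction of the side pairings fail: you say that $[c^{-1}\cdot\beta'_i\cdot c]_{b''}$ corresponds to ``$\g'_i$ conjugated by the element of $\gp$ that sends $\h{b'}$ to $\h{b''}$''. Unless $b''$ happens to coincide with $b'$, no such element exists: $b'$ and $b''$ are distinct points of $\hypS$, so $\h{b'}$ and $\h{b''}$ lie in different $\gp$-orbits, and $\tab$ only stores elements of $\gp$ relating the vertices of $\poly'$ to $\h{b'}$ --- it contains nothing mapping one orbit to the other. The correct statement, which is what the paper uses, is that no conjugation is needed: if $\h{b''}$ is taken to be the endpoint of the lift of $c$ starting at $\h{b'}$ --- equivalently, the intersection of the axes of the translations $g_0$ and $g_1$ read off from $\tab$ --- then the deck transformation attached to $[c^{-1}\cdot\beta'_i\cdot c]_{b''}$ is exactly $\g'_i$, so $\g''_i=\g'_i$, and the $4g$ vertices of $\poly''$ are obtained by applying to $\h{b''}$ the same elements of $\tab$ that produced the vertices of $\poly'$ from $\h{b'}$. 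With that correction your complexity and perimeter analysis go through; for completeness you should also say how the closed geodesics are obtained in $O(1)$ (as the axes of $g_0,g_1$, i.e.\ by solving $g_0(z)=z$ and $g_1(z)=z$), and note that the lift $\h{\beta'_1}$ must be chosen, as in Section~\ref{sec:magic}, in the convex half-space bounded by the broken line $\hat{\beta_0'}$ so that the two axes are guaranteed to intersect.
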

      
\begin{proof}
  We need to compute the output convex polygon $\poly''$ i.e., $4g$ lifts of $b''$ and $2g$ side pairings $\g''_0,\cdots,\g''_{2g-1}$. As homotopy classes of $\beta''_i$ and $\beta'_i$ are conjugates for $i=0,\ldots,2g-1$ (Equation~\ref{eq:conjug}), the side pairing $\g''_i$ is equal to $\g''_i$ for each $i$.

The key point here is the computation of a lift of $b''$. The first step consists in finding the loops $\beta'_0$ and $\beta'_1$ satisfying $i(\beta'_0,\beta'_1)=1$. As shown in Remark~\ref{rem:inter}, we can choose any loop for $\beta'_0$. We also fix $\h{b_0'}$ to be an endpoint of one of the two paired sides of $\poly'$ that are lifts of $\beta'_0$. We compute the ordered sequence of loops around $b'$ as in Lemma~\ref{lem:incident-edges}, in $O(g)$ operations; recall that each loop $\beta'_0,\hdots, \beta'_{2g-1}$ appears twice in the sequence (Section~\ref{sec:fpoly}). We take as $\beta'_1$ one of the loops that alternate with $\beta'_0$ in the sequence, and choose for $\h{\beta'_1}$ one of its two lifts that are incident to $\h{b_0'}$. 

The second step consists in finding the free geodesics in the homotopy classes of $\beta'_0$ and $\beta'_1$, respectively. In $\tab$, we find the word $g_0$ on $\{\g'_0,\cdots,\g'_{2g-1}\}$ representing the translation that sends $\h{b_0'}$ to the other endpoint of $\h{\beta'_0}$ (see Figure~\ref{fig:primetosecond}). The sequences $g_0^n(\h{b'_0})$ and $g_0^{-n}(\h{b'_0})$ converge in $\mathbb{C}$ to two points on the unit circle: these points are the two (infinite in $\H$) fixed points of the translation $g_0$, i.e., the two solutions of equation $g_0(z)=z$ in $\mathbb{C}$. The axis of $g_0$, i.e., the geodesic between these two points,  projects onto $\hypS$ to the free geodesic in $[\beta'_0]$. 

\begin{figure}[htbp]
	\centering
	\includegraphics[page=2]{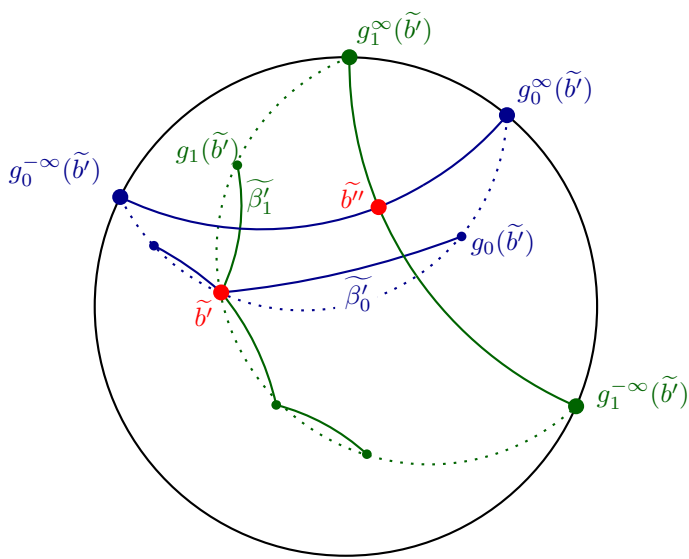}
	\caption{The computation of $b''$ from $b'$.}
	\label{fig:primetosecond}
\end{figure}

We repeat the same process with $\h{\beta'_1}$ and find the geodesic in $\H$ that projects to the free geodesic in $[\beta'_1]$. The point $\h{b_0''}$ comes as the intersection point of the two geodesics in $\H$. We now define $\h{\beta''_0}$ as the geodesic segment between $\h{b_0''}$ and $w_0(\h{b_0''})$, and $\h{\beta''_1}$ in the same way. This step is performed in constant time. 

We can now compute the $4g$ lifts of $b''$ that are the vertices of $\poly''$ by applying the elements of $\tab$ to $\h{b_0''}$. This last step has complexity $O(g)$. Additionally, we have 
\[
	\hspace{10em}\ell(\beta_i'')\le\ell(\beta_i')+2\ell(c)=5\cdot(\max_j\ell(\beta_j'))
\]
for each $i=1,\ldots,2g-1$ (by Lemma~\ref{lem:betalength}) and thus the perimeter of $\poly''$ is $O(g)$ times bigger than the perimeter $\poly'$.
\end{proof}

\section{Finding a Dirichlet domain from an embedded system of loops}\label{sec:dirichlet}
We first summarize what we have obtained so far. We started with a polygon $\poly$ of perimeter $L$ and we obtained a convex polygon $\poly''$ of total length $O(g^2L)$. Additionally, all vertices of $\poly''$ project on a single vertex $b''$ on $\hypS$. This construction has complexity $O(g^3)$ by Propositions~\ref{prop:initial-loops} and~\ref{prop:convex-polygon}. 
At this point, it is easy to compute a Dirichlet domain. Indeed, we can now triangulate $\poly''$ easily since it is convex and, thus, we obtain a geometric triangulation $T$, on to which the Delaunay flip algorithm can be applied~\cite{despreflips}.
The complexity of this algorithm depends on the diameter of $T$, for which the perimeter of $\poly''$ is an upper bound.

The output of the flip algorithm is a Delaunay triangulation $DT$ of $\hypS$ with the single vertex $b''$ computed in Section~\ref{sec:magic}. To obtain a Dirichlet domain from $DT$, we just have to compute the triangles of $\h{DT}$ incident to a lift $\h{b''}$ of $b''$ and their dual: we compute the circumcenter of each triangle to obtain the vertices of the Dirichlet domain and we put a geodesic between vertices that correspond to adjacent triangles around $\h{b''}$. This step is also clearly done in $O(g)$ operations. Putting all together we obtain the following theorem:

\begin{theorem}\label{theo:final}
	Let $\hypS$ be a closed orientable hyperbolic surface of genus $g$ given by a fundamental polygon of perimeter $L$ and side pairings. A Dirichlet domain of $\hypS$ can be computed in time $O(f(g^2L)+g^3)$ where $f(\Delta)$ is the complexity of the flip algorithm for a triangulation of diameter $\Delta$ with a single vertex.
      \end{theorem}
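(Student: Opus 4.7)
The plan is to chain together the three algorithmic steps already listed in Section~\ref{sec:algo}, invoking Propositions~\ref{prop:initial-loops} and~\ref{prop:convex-polygon} for the first two, and then reducing the remaining Dirichlet computation to a call to the Delaunay flip algorithm of~\cite{despreflips} followed by the standard Voronoi/Delaunay duality. The complexity statement will then follow by simply adding the costs of the four phases.

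First, starting from the input polygon $\poly$ of perimeter $L$, I apply Proposition~\ref{prop:initial-loops} to obtain in $O(g^3)$ time a topological polygon $\poly'$ bounded by $2g$ loops based at a common point $b'$, of total length $L'=O(gL)$. Feeding $\poly'$ into Proposition~\ref{prop:convex-polygon} then produces, in $O(g)$ time, a convex fundamental polygon $\poly''$ with $4g$ sides, all vertices projecting to a single point $b''\in\hypS$, and perimeter $O(g L')=O(g^2 L)$. At this stage the topology has been fully resolved and only geometric work remains.

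Next, because $\poly''$ is convex, I triangulate it trivially by joining one of its vertices to all non-adjacent vertices, producing $4g-2$ geodesic triangles in $O(g)$ time. Under the side pairings, this descends to a geometric (one-vertex) triangulation $T$ of $\hypS$, whose diameter is at most the perimeter of $\poly''$, hence $O(g^2 L)$. I then run the Delaunay flip algorithm on $T$; by the definition of $f$ this takes $f(g^2L)$ time and outputs a geometric Delaunay triangulation $\DT{\hypS}$ of $\hypS$ with the single vertex $b''$.

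Finally, to extract a Dirichlet domain of a lift $\h{b''}$, I use the Voronoi--Delaunay duality: the triangles of $\proj^{-1}\DT{\hypS}$ incident to $\h{b''}$ are $O(g)$ in number and can be enumerated in $O(g)$ time around $\h{b''}$ using the data structure of Section~\ref{sec:algo}; the hyperbolic circumcenter of each such triangle is a vertex of the Voronoi cell $\Dir{b''}$, and geodesics between circumcenters of consecutive triangles around $\h{b''}$ are its edges. This final extraction is $O(g)$. Summing the four contributions gives $O(g^3)+O(g)+f(g^2L)+O(g)=O(f(g^2L)+g^3)$, as claimed. The only delicate point is confirming that the convexity of $\poly''$ together with the identification of all vertices to $b''$ guarantees that the trivial triangulation of $\poly''$ descends to an honest geometric triangulation of $\hypS$ on which the flip algorithm is applicable; this is exactly what Theorem~\ref{thm:betagood} ensures, so no further work is needed.
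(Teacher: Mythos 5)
Your proposal is correct and follows essentially the same route as the paper: chain Propositions~\ref{prop:initial-loops} and~\ref{prop:convex-polygon}, fan-triangulate the convex polygon $\poly''$, run the Delaunay flip algorithm (whose cost is $f$ of the diameter, bounded by the perimeter $O(g^2L)$), and extract the Dirichlet domain of a lift of $b''$ by dualizing the $O(g)$ incident triangles via their circumcenters. The cost accounting $O(g^3)+O(g)+f(g^2L)+O(g)=O(f(g^2L)+g^3)$ matches the paper's argument exactly.
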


Using the best known bound for the flip algorithm so far, we obtain Theorem~\ref{th:main} stated in the introduction as a corollary. Note that the constant in the $O()$ depends on the metric on $\hypS$. However, there are experimental and theoretical insights suggesting that the actual complexity of the flip algorithm may be much better~\cite{loicflips2021}.


\bibliography{bibliography}

\end{document}